\newtheorem{theorem}{Theorem}
\newtheorem{lemma}[theorem]{Lemma}
\newtheorem*{remark}{Remark}
\providecommand{\keywords}[1]
{
  \small	
  \textbf{\textit{Keywords---}} #1
}
\begin{document}
\title{Optimal liquidation with temporary and permanent price impact, an application to cryptocurrencies}
\author[1]{Hugo E. Ramirez \thanks{\href{mailto:hugoedu.ramirez@urosario.edu.co}{hugoedu.ramirez@urosario.edu.co}}}

\author[1]{Juli\'an Fernando S\'anchez L\'opez \thanks{\href{mailto:julianf.sanchez@urosario.edu.co}{julianf.sanchez@urosario.edu.co}}}

\affil[1]{Universidad del Rosario, Calle 12C No. 4-69, Bogot\'a, Colombia}



\maketitle

\begin{abstract}
This paper studies the optimal liquidation of stocks in the presence of temporary and permanent price impacts, and we focus in the case of cryptocurrencies. We start by presenting analytical solutions to the problem with linear temporary impact, and linear and quadratic permanent impact. Then, using data from the order book of the BNB cryptocurrency, we estimate the functional form of the temporary and permanent price impact in three different scenarios: underestimation, overestimation and average estimation, finding different functional forms for each scenario. Using finite differences and optimal policy iteration, we solve the problem numerically and observe interesting changes in the optimal liquidation policy when applying calibrated linear and power forms for the temporary and permanent price impacts. Then, with these optimal policies, we identify optimal liquidation trajectories and simulate the liquidation of initial inventories to compare the performance among the optimal strategies under different parametrizations and against a naive strategy. Finally, we characterize the optimal policies based on the functional form of the inventory and find that policies generating the highest revenue are those starting with a low trading rate and increasing it as time passes.
\end{abstract}

\keywords{Optimal liquidation, price impact, finite differences, Stochastic Control, cryptocurrencies}

\section{Introduction}
The market of cryptocurrencies is increasing in importance since the net value of the market has grown from $\$3.8 $ billion $USD$ in $2015$ to $\$943.41$ billion $USD$ in the second quarter of $2022$, having a maximum of over $ \$3000$ Billion $USD$\footnote{Data from https://www.statista.com/statistics/730876/cryptocurrency-maket-value/ and  https://coinmarketcap.com/}. Likewise, the use of cryptocurrencies has advantages such as its articulation with Decentralized Finance (DeFi)\footnote{In accordance with \citet{pi22} DeFi can be understood as the process that makes use of the blockchain for the development and implementation of novel financial products and services.}, security and transparency, among others, which, according to \citet{blan21}, make them very convenient investment alternatives. However, this type of investment is very volatile and many investors may have or want to acquire or liquidate their positions optimally on a given time period. In this paper, we focus on optimal strategies for liquidation of assets considering price impacts, specifically different forms of temporary and permanent price impact calibrated from cryptocurrencies. 

Optimal liquidation consists in finding the optimal strategy an agent must follow when selling large amounts of shares over a finite period of time and minimizing adverse effects, consequences of her own actions. The time dependence is of great importance because, if the agent executes in a short time, by selling fast, the rate of execution increases and so does the impact on the price of shares. But if she sells at a slow pace over a long period of time, holdings are exposed to a greater uncertainty due to the volatility, which is crucial in the case of cryptocurrencies. Thus, in our setting, the agent must sell a number $\Upsilon$ of shares in a given time frame $[0,T)$, where she must find the optimal selling rate $ \nu_t $ such that the revenue from liquidation is maximized. To this end, we adopt an optimization problem based on a model that includes linear and non-linear impact functions, and is calibrated with data from the Limit Order Book (LOB) of the BNB cryptocurrency. To tackle the problem, we use stochastic control techniques and present two original analytical solutions in specific scenarios, but in more general settings and when analytical solutions were not at hand, we use numerical methods, that is finite differences over a non-linear Partial Differential Equation (PDE). 

The well-known model for optimal execution of transactions in \citet{ac00} establishes two types of impact for the asset price: temporary and permanent. In their work, temporary impact refers to ``...temporary imbalances in supply in demand caused by our trading leading to temporary price movements away from equilibrium'' and permanent impact ``...means changes in the equilibrium price due to our trading which remain at least for the life of our liquidation''. Theoretical approaches agree that the permanent price impact must be linear, \citet{g10} shows this condition is necessary to avoid dynamic arbitrage. He presents \textit{the principle of no-dynamic-arbitrage} which states that the trading cost is non-negative for any round-trip trade strategy, i.e. $\displaystyle \int_{0}^{T}\nu_t dt=0$, and shows that non-linear permanent market impact is inconsistent with the principle of no-dynamic-arbitrage. Conversely, our study focuses in only liquidating shares, and then the sum of this sequence of trades cannot be zero, therefore the principle of no-dynamic arbitrage is not applicable. Consequently our model considers non-linear permanent price impacts, which are estimated from data, and studies the effect of such impact functions on the optimized trading rate.

Additionally, empirical evidence shows that market impact is not linear, but similar to a square root. That is, it is proportional to the square root of the volume of shares executed, as seen in \citet{teb16} and \citet{Almgren05}, or to the square root of the trade duration, as in \citet{br13}. Moreover, different approaches using nonlinear temporary or permanent price impact have been studied in the literature, for example \citet{g14} relates to the result in \citet{g10} and extends some theoretical results to the nonlinear setting, \citet{bl18} offers a theoretical approach, considering stochastic temporary and permanent price impacts. \citet{a} proposes a model where the impact depends on the theoretical shape of LOB given by a density (exponential), whereas our model does not rely on any shape and is thus capable of capture the actual book in simulations. More recently \citet{BRUNOVSKY2018} solves for optimal liquidation with linear price impact, although importantly shows some theoretical advances by adding the possibility of acquiring shares as well as selling.

Although related literature shows different perspectives to the solution of the optimal liquidation problem for large amounts of shares, in this paper we study the problem based on the setting proposed in \citet{cjp15}, but using more general functional forms for temporary and permanent price impact (TPI and PPI respectively). At first, we extend the theoretical model in \citet{cjp15} by considering linear TPI, and linear and quadratic PPI, and show close form solutions for these models. Since analytical solutions are scarce, for more general models we use numerical methods, specifically finite differences and optimal policy iteration, and solve the problem for different forms of the TPI and PPI functions, i.e. functions result of calibration. This type of problems, i.e. a PDE with two dimensions in space and only one diffusion, are frequently numerically solved using the ADI scheme, but have some stability issues, as shown in the classical work of \citet{Peaceman55}, henceforth also, for example, in \citet{Thomas98},\citet{Tavella00} and more recently by \citet{duffy04}. So in order to improve convergence in the multi-dimensional finite differences we propose the use of an alternative scheme, namely implicit directional scheme. 

After that, using the technique in \citet{cj16} we calibrate, in a cryptocurrency market, the functional forms of the TPI and PPI functions. A first aspect to highlight is that calibrated parameters of the functional forms for the TPI and PPI change according to the depth in ticks and therefore the volumes of the LOB \footnote{We refer to the volumes of the LOB as the number of shares posted at each tick level}. That is, given a fixed LOB setting (depth in ticks and volumes), if the inventory to liquidate is sufficiently small, the liquidation can be done barely walking the book, so that the impact would be small and increasing as the liquidation target is reached, this configures a power form $x^a$, where $a>1$. The opposite case occurs when the inventory to be liquidated is large for the LOB setting, since at first the impact would be large, but decreasing as the liquidation continues, this configures a power form $x^a$ but having $0<a <1$. As a third scenario, we find a in-between point, where the TPI and PPI approaches a linear form, i.e. $x^a$ with  $a=1$. Using this three configurations for calibration gives different functions for TPI and PPI, we are able to identify the different optimal trading policies for each of these scenarios. Thus, in accordance with \citet{Lillo03} and \citet{Almgren05}, modeling the problem must include power forms for TPI and PPI, but for these more complex and general scenarios, the use numerical techniques to find solutions is compulsory.

We find that optimal liquidation strategies change for different TPI and PPI functions, and that these outperform a naive strategy of liquidating all the inventory at a single point in time, which we confirm with simulations. We observe that, we may describe these strategies in a common functional form of the optimal inventory, as a non-increasing monotonic function $\displaystyle q(t) = \frac{-\Upsilon}{T^{d_2}}t^{d_2} +\Upsilon$ subject to $q(0) = \Upsilon$ and $q(T) = 0$. Under certain conditions and using real life data simulations, similar to a paper trading, the optimal policies starting with a slow rate and increasing it as time passes have the highest cumulative revenue. Thus, our results in some sense confirm the conclusions about concavity of market impact in \citet{Curato2015}, stating that concave functions generate more profit. This suggests that beyond the results in calibration, the performance of optimal policies does not depend strongly on the functional form of TPI and PPI, and that such performance may be characterized by the initial amount of shares to be liquidated compared with the availability of the LOB. This has an important potential in practice, since eliminates the hustle of calibrating models and thus saves time by implementing directly the optimal strategy.

This paper is structured as follows: section $2$ introduces the model, based on \citet{cjp15}, and analytical solutions of the problem of liquidation from the perspective of optimal stochastic control. Section $3$ presents analytical solutions when the TPI is linear, and PPI is linear and quadratic as extensions of \citet{cjp15}, and a general solution using finite differences, for more general TPI and PPI. Section $4$ shows the estimation of a possible functional forms of TPI and PPI and to calibration of the parameters for the BNB cryptocurrency, using the LOB data. In section $5$, we find the optimal policies and analyze them by varying some of the calibrated parameters. Also, we compare the performance of numeric and naive trading strategies simulating the optimal liquidation of initial inventories on LOB data. Finally, we establish a characterization of the optimal policies from the optimal inventory available, and independent of the calibration. In section $6$ we give some final remarks.

\section{The model}
We work on a probability space $\left(\Omega, \mathcal{F},\mathbb{P} \right)$ and build on the model proposed by \citet{cjp15}, the agent has an amount $\Upsilon$ of shares of an asset that she wants to liquidate in the time interval $\left[0,T\right)$ and obtain the most profit. The inventory changes through time and has dynamics ${dQ}^\nu_t={- \nu}_tdt$, thus $\nu_t$ is the trade rate. The mid price of the asset is governed by the SDE 
\begin{equation}
\displaystyle {dS}^\nu_t=- g\left(\nu_t\right)dt+\sigma dW_t,
\end{equation}
where $W_t$ is a standard Brownian motion, $\sigma \in \mathbb{R}^+$ represents the volatility and $g: \mathbb{R}^+ \to \mathbb{R}^+$ is the permanent price impact, dependent of the trade rate. When an agent sells shares of the stock the actual price of the transaction is called the execution price and follows 
\begin{equation}
\displaystyle {\hat{S}}^\nu_t= \left(S^\nu_t - \frac{1}{2}\Delta \right) - f\left(\nu_t\right),
\end{equation}
where $\Delta$ is the bid-ask spread in the LOB and the function $f: \mathbb{R}^+ \to \mathbb{R}^+$ is the temporary price impact. Finally the agent wants to maximize her final expected utility wealth, earnings from transactions, that is
\begin{equation}
\displaystyle \mathbb{E}\left\{\int^T_0{{\hat{S}}^\nu_t\nu_tdt}\right\}
\end{equation}
the rate $\nu_t$ is controlled by the agent at each time point $t$ and her actions affect the asset's liquidation price, thus, the optimization problem can be represented by the agent's value function
\begin{equation}\label{eqn:AgentValueFunc}
H\left(t,S,q\right) = \sup_{\nu_t \in  \mathcal{A}(t,T)} { \mathbb{E}\left\{\int^T_t \left(\left(S^\nu_r - \frac{1}{2}\Delta \right)-f\left(\nu_r\right)\right)\nu_r dr \right\}}
\end{equation}
where $\mathcal{A}(t,T)$ represents the admissible set of non-negative bounded strategies and the equation \eqref{eqn:AgentValueFunc} satisfies the $\mathcal{H}\mathcal{J}\mathcal{B}$ partial differential equation (PDE) \footnote{For a detailed explanation, refer to chapter $6$ of \citet{cjp15}}
\begin{equation} \label{eqn:AgentHJB}
\displaystyle 0 = {\partial }_tH+\frac{1}{2}{\sigma }^2{\partial }_{ss}H +\sup_{\nu \in  \mathcal{A}}\left\{-{g(\nu)\partial }_sH-\nu{\partial }_qH+\left(S-\frac{1}{2}\Delta -f(\nu)\right)\nu\right\}
\end{equation}
subject to
\begin{eqnarray}
	\label{eqn:cond1HJB}
	\displaystyle H(T,S,q)& \rightarrow & -\infty,  \text{ when } t \rightarrow T \text{ and }q>0,\\
	\label{eqn:cond2HJB}
	\displaystyle H(t,S,0)& \rightarrow & 0, 
\end{eqnarray}
the first condition corresponds to the penalty, negative revenue, for approaching T with positive inventory, and the second is complementary and guarantees that at time $t$ the revenue is zero if the inventory is zero.

Next, we extend the model in \citet{cjp15} by changing the PPI to a linear and quadratic function of the trade rate $\nu$. Although linear PPI is treated in \citet{cjp15} and \citet{cj17}, our approach is essentially different because of the type of penalty, in the range $[0,T)$. Since our penalty is a boundary condition going to $-\infty$, we are not allowing the manager to have any inventory at maturity. On the other hand, \citet{cjp15} and \citet{cj17} have a quadratic penalty, which, although costly, allows the manager to end up with some inventory.

\section{Model solution approaches}
\subsection{Analytical approach}
It is always satisfactory to have analytical solutions, this section shows the scenarios where we were able to extend a close form formula for solutions. The following lemmas show the analytical solutions for the  $\mathcal{H}\mathcal{J}\mathcal{B}$ PDE \eqref{eqn:AgentHJB} subject to conditions \eqref{eqn:cond1HJB} and \eqref{eqn:cond2HJB}, when PPI is linear and quadratic, respectively. These lemmas may be seen as extensions of the solution given in section $6.3$ of \citet{cjp15}.
\begin{lemma}[\textbf{Linear permanent price impact}]\label{lm:linear}
	Let TPI and PPI be linear functions, that is $f\left(\nu_t\right)=a_1 \nu_t +a_2\ $, and  $g(\nu_t)=b_1\nu_t$ \footnote{For simplicity and to find an analytical solution of $\mathcal{H}\mathcal{J}\mathcal{B}$ PDS \eqref{eqn:AgentHJB} we use $g(\nu_t)=b_1\nu_t$ instead of $g(\nu_t)=b_1 \nu_t + b_2$ since the independent term $b_2$ does not allow us to find an analytical solution.} respectively, then the solution for the  $\mathcal{H}\mathcal{J}\mathcal{B}$ differential equation \eqref{eqn:AgentHJB} subject to \eqref{eqn:cond1HJB} and \eqref{eqn:cond2HJB} is
	\begin{equation}
		\label{eqn:Hlin}
		H\left(t,S,q\right)=q\left(S-\frac{1}{2}\Delta-a_2\right)-\left(\frac{b_1}{2} +\frac{a_1}{T-t} \right)q^2,
	\end{equation}
	and the optimal strategy for the liquidation of $q_t$ shares in the time interval $[t,T)$ is
	\begin{equation}
		\label{eqn:Nulin}
		\nu_t^*=\frac{q_t}{T-t}. 
	\end{equation}  
\end{lemma}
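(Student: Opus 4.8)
The plan is to solve the HJB equation \eqref{eqn:AgentHJB} by the standard guess-and-verify route for Almgren--Chriss type problems: first resolve the pointwise supremum over $\nu$, then reduce the resulting nonlinear PDE to a scalar ODE through a structural ansatz, solve that ODE using the terminal penalty to fix the free constant, and finally confirm the boundary conditions and read off the optimal rate.

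First I would carry out the inner optimization. Because $f$ and $g$ are linear, the bracketed expression in \eqref{eqn:AgentHJB} equals $\big(S-\tfrac12\Delta-a_2-b_1\,\partial_sH-\partial_qH\big)\nu-a_1\nu^2$, a concave quadratic in $\nu$ (as $a_1>0$). The first-order condition gives the unconstrained maximizer
\begin{equation*}
\nu^*=\frac{1}{2a_1}\Big(S-\tfrac12\Delta-a_2-b_1\,\partial_sH-\partial_qH\Big),
\end{equation*}
and the supremum equals $\tfrac{1}{4a_1}\big(S-\tfrac12\Delta-a_2-b_1\,\partial_sH-\partial_qH\big)^2$. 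I would postpone checking the non-negativity constraint of $\mathcal{A}$ until $\nu^*$ is known explicitly.

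Next I would propose the ansatz $H(t,S,q)=q\big(S-\tfrac12\Delta-a_2\big)+h(t)\,q^2$, motivated by the linear ``shares times price'' mark-to-market term and the quadratic inventory penalty typical of these models. With this form $\partial_{ss}H=0$, so the diffusion term drops out; moreover the explicit $S$- and $a_2$-dependence cancels inside the squared bracket, which collapses to $-q\big(b_1+2h(t)\big)$. Substituting into the maximized HJB leaves the scalar Riccati ODE
\begin{equation*}
h'(t)=-\frac{\big(b_1+2h(t)\big)^2}{4a_1}.
\end{equation*}
I would solve it with the substitution $v=b_1+2h$, which linearizes it to $(1/v)'=\tfrac{1}{2a_1}$, so that $1/v$ is affine in $t$.

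The decisive step is fixing the integration constant from the boundary data. The condition $H(t,S,0)=0$ holds for any $h$, so it carries no information; instead the penalty \eqref{eqn:cond1HJB} forces $h(t)\to-\infty$ as $t\to T$, i.e. $1/v\to0$ at $t=T$. This selects the constant and yields $v=-2a_1/(T-t)$, hence $h(t)=-\tfrac{b_1}{2}-\tfrac{a_1}{T-t}$, which is exactly \eqref{eqn:Hlin}. I would then verify both boundary conditions directly ($H\to-\infty$ for $q>0$ since $a_1>0$, and $H=0$ at $q=0$) and substitute $b_1+2h=-2a_1/(T-t)$ into the maximizer to obtain $\nu^*=q/(T-t)$, which is non-negative, so the constraint in $\mathcal{A}$ is indeed slack. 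The main obstacle I anticipate is not the algebra but \emph{(i)} recognizing the correct ansatz and \emph{(ii)} pinning down the free constant, since the two stated boundary conditions alone do not determine it --- only the singular blow-up at maturity does; a clean verification-theorem argument that the candidate is genuinely the value function, despite $\nu^*$ being unbounded near $T$, is the remaining delicate point.
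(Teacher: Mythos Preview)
Your proposal is correct and follows essentially the same guess-and-verify route as the paper: optimize the HJB over $\nu$ via the first-order condition, insert a structural ansatz that kills the $S$-dependence, reduce to a one-variable equation, and fix the free constant through the blow-up at $t=T$. The only cosmetic difference is that the paper's ansatz is $H=q(S-\tfrac12\Delta-a_2)-\tfrac{b_1}{2}q^2+h(t,q)$, leading to the Hamilton--Jacobi type PDE $\partial_t h+\tfrac{1}{4a_1}(\partial_q h)^2=0$ (which is then solved for $h=-a_1 q^2/(T-t)$), whereas you build the quadratic-in-$q$ structure into the ansatz from the outset and land directly on a scalar Riccati ODE for $h(t)$; both collapse to the same solution, and your discussion of which boundary condition actually pins down the constant is in fact more explicit than the paper's.
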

\begin{proof}
	See appendix \ref{sec:App1}.
\end{proof}
\begin{lemma}[\textbf{Quadratic permanent price impact}]\label{lm:linqu}
	Let us have a linear TPI and a quadratic PPI, that is $f\left(\nu_t\right)=a_1 \nu_t +a_2$, and  $g(\nu_t)=c_1\nu_t^2+c_2\nu_t\ $, respectively, then the analytical solution for the  $\mathcal{H}\mathcal{J}\mathcal{B}$ PDE \eqref{eqn:AgentHJB} subject to \eqref{eqn:cond1HJB} and \eqref{eqn:cond2HJB} is
	\begin{equation}
		\label{eqn:Hquad}
		H\left(t,S,q\right)= \left\{ \begin{array}{lcc}
			q\left(S-\frac{1}{2}\Delta-a_2\right)-\frac{c_2}{2}q^2-\frac{4\left(c_1q+a_1\right)^3}{9c_1^2(T-t)} &   if  & q \neq 0 \\
			\\ 0 & if & q = 0,
		\end{array}
		\right.
	\end{equation}
	and the optimal strategy for the liquidation of $q_t$ shares in the time interval $[t,T)$ is
	\begin{equation}
		\label{eqn:Nuquad}
		\nu_t^*=\frac{2\left(c_1q_t+a_1\right)}{3c_1\left(T-t\right)}
	\end{equation}  
\end{lemma}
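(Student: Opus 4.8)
The plan is to follow the same route as in the proof of Lemma~\ref{lm:linear}: first reduce the $\mathcal{H}\mathcal{J}\mathcal{B}$ equation \eqref{eqn:AgentHJB} by solving the inner maximization explicitly, then propose an ansatz for $H$ that linearizes the $S$-dependence, and finally verify that the candidate \eqref{eqn:Hquad} together with \eqref{eqn:Nuquad} solves the reduced PDE and meets the boundary conditions \eqref{eqn:cond1HJB}--\eqref{eqn:cond2HJB}. First I would carry out the first-order condition. Substituting $f(\nu)=a_1\nu+a_2$ and $g(\nu)=c_1\nu^2+c_2\nu$, the bracketed term in \eqref{eqn:AgentHJB} is a smooth function of $\nu$; differentiating and setting the derivative to zero yields the single linear equation
$$-(2c_1\nu+c_2)\partial_s H-\partial_q H+\left(S-\tfrac12\Delta-a_2\right)-2a_1\nu=0,$$
so that $\nu^*=\bigl[(S-\tfrac12\Delta-a_2)-c_2\partial_s H-\partial_q H\bigr]/\bigl[2(c_1\partial_s H+a_1)\bigr]$. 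The coefficient of $\nu^2$ in the Hamiltonian is $-(c_1\partial_s H+a_1)$, so once I establish that $\partial_s H=q\ge 0$ this is strictly negative and the critical point is the global maximizer; I would also check $\nu^*\ge0$ so that it lies in $\mathcal{A}$.

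Next, motivated by the linear case and by the fact that the running reward and the $S$-dynamics are affine in $S$, I would posit the ansatz $H(t,S,q)=q\left(S-\tfrac12\Delta-a_2\right)-\tfrac{c_2}{2}q^2+\psi(t,q)$, which gives $\partial_s H=q$ and $\partial_{ss}H=0$ and collapses \eqref{eqn:AgentHJB} to an equation in $t$ and $q$ alone. Feeding $\partial_s H=q$ into the first-order condition, the $S$-dependent and $c_2$-terms cancel and $\nu^*$ reduces to $-\partial_q\psi/[2(c_1q+a_1)]$; matching this to the claimed optimal rate \eqref{eqn:Nuquad} forces $\partial_q\psi=-\tfrac{4(c_1q+a_1)^2}{3c_1(T-t)}$, and integrating in $q$ recovers $\psi=-\tfrac{4(c_1q+a_1)^3}{9c_1^2(T-t)}$ up to a function of $t$. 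I would then substitute this $\psi$ and $\nu^*$ back and confirm that $\partial_t\psi$ exactly cancels the residual Hamiltonian $\tfrac{4(c_1q+a_1)^3}{9c_1^2(T-t)^2}$, which pins the free function of $t$ to zero and verifies \eqref{eqn:Hquad}. The terminal behavior follows since, for $q>0$ and $c_1,a_1>0$, the last term of \eqref{eqn:Hquad} diverges to $-\infty$ as $t\to T$, giving \eqref{eqn:cond1HJB}.

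The main obstacle is the boundary at $q=0$. Unlike the linear case \eqref{eqn:Hlin}, the smooth branch of \eqref{eqn:Hquad} does not vanish as $q\to0$ but tends to $-\tfrac{4a_1^3}{9c_1^2(T-t)}\neq0$, so the formula cannot be extended continuously to satisfy \eqref{eqn:cond2HJB}; this is precisely why the statement is piecewise and sets $H(t,S,0)=0$ by hand. I would justify this by the economic meaning of the problem, namely that an empty inventory generates no further revenue, and by noting that under $\nu^*$ a liquidation-only trajectory does not reach $q=0$ before $t=T$. A secondary point requiring care is the passage from \emph{candidate solution of the $\mathcal{H}\mathcal{J}\mathcal{B}$ PDE} to \emph{value function}: because the penalty enters through a boundary condition that blows up at $t=T$ rather than through a bounded terminal cost, a standard verification theorem does not apply verbatim, and I would either reuse the verification argument employed for Lemma~\ref{lm:linear} or appeal to the construction in \citet{cjp15}.
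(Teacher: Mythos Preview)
Your proposal is correct and follows essentially the same route as the paper: apply the first-order condition to obtain the reduced PDE $\partial_t H+\tfrac{1}{2}\sigma^2\partial_{ss}H+\dfrac{(S-\tfrac12\Delta-a_2-\partial_q H-c_2\partial_s H)^2}{4(c_1\partial_s H+a_1)}=0$, insert the ansatz $H=q(S-\tfrac12\Delta-a_2)-\tfrac{c_2}{2}q^2+h(t,q)$ to collapse it to $\partial_t h+\dfrac{(\partial_q h)^2}{4(c_1q+a_1)}=0$, and read off \eqref{eqn:Hquad}--\eqref{eqn:Nuquad}. Your write-up is in fact more detailed than the paper's sketch (concavity of the Hamiltonian, the discontinuity at $q=0$, verification); one small slip is that the PDE check only forces the free function of $t$ to be \emph{constant}, not zero, and that constant is then set by convention since the $q=0$ boundary cannot fix it continuously, as you yourself note.
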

\begin{proof}
See appendix \ref{sec:App2}.
\end{proof}
\begin{remark}
As in lemma \ref{lm:linear} no independent term is included in the quadratic PPI function, because this allows us calculate an analytical solution of the equation \eqref{eqn:AgentHJB}.
\end{remark}
From lemmas \ref{lm:linear} and \ref{lm:linqu} we appreciate that whereas $H$ is quadratic in the $q$ variable for the former, it is cubic for the latter, although linear for $S$ in both cases. This implies that the change of the PPI from linear to quadratic affects in a similar manner the optimal revenue $H$. Although interestingly note that the optimal policy in both cases is a linear function of $\displaystyle \frac{q}{T-t}$.   

Since obtaining analytical solutions to any type of PPI function is a hard task, and in many cases even impossible, to extend our model, we use numerical solutions. Specifically, we want to test optimal strategies in a cryptocurrency market and use, besides linear, calibrated power TPI and PPI functions for which we do not have analytical solutions.

\subsection{Numerical approach}\label{sec:Numerics}
\subsubsection{\textit{Finite differences}}
We use \textit{implicit finite differences} to find the solution of the PDE \eqref{eqn:AgentHJB}. This technique approximates the partial derivatives through discrete expressions constructed from Taylor expansions and thereby giving an approximate solution of the differential equation.

We denote the discretized version of the function $H$ by $H_{i,j}^k=H\left(k\Delta t,i\Delta s,j\Delta q\right)$, where $t=k\Delta t$, $s=i\Delta s$ and $q=j\Delta q$, additionally we set $S_{max}=N_s\Delta s$, $Q_{max}=N_q \Delta q$ and $T=N_t\Delta t$. As boundary conditions are great of importance in this technique, we state these next:
\begin{itemize}
\item  When the price is $S=0$, we think of it as an absorbent state and the possible revenue for potential trades at this price is $0$, thus
\begin{equation}
\displaystyle H(t,0,q) = 0.
\end{equation}
\item If the inventory reaches the $q=0$ level, there is no more shares to sell, having
\begin{equation}
H(t,S,0)=0.
\end{equation}
\item If the price reaches a maximum (we have to fix a maximum price $S_{max}$ for the numerical approach) the revenue is the amount of shares sold multiplied by the fixed price. That is the inventory multiplied by the fixed maximum price.
\begin{equation}
\displaystyle H(t,S_{max},q)=S_{max}q.
\end{equation}
\item We avoid the case of having any inventory $q>0$ at final time $T$, thus we introduce a heavy penalty (minus infinity to be sure it never happens)
\begin{equation}
H(T,S,q)= -\infty.
\end{equation}
\end{itemize}
By using central differences on $S$, forward differences on $q$ and backward differences on $t$, we obtain the following discretized differential equation that must be optimized
\begin{equation}
\begin{split}
\frac{H_{i,j}^{k+1}-H_{i,j}^{k}}{\Delta t}+\frac{\sigma^2}{2}\left(\frac{H_{i+1,j}^{k}-2H_{i,j}^{k}+H_{i-1,j}^{k}}{(\Delta s)^2}\right) -\nu \left(\frac{H_{i,j}^{k}-H_{i,j-1}^{k}}{\Delta q}\right) \\ -g(\nu)\left(\frac{H_{i+1,j}^{k}-H_{i-1,j}^{k}}{2\Delta s}\right) +\left(i\Delta s-\frac{1}{2}\Delta -f(\nu)\right)\nu=0,
\end{split}
\end{equation}
which has approximation of orders $O(\Delta t),O(\Delta s^2), O(\Delta q)$.

By grouping similar terms and leaving the unknown terms (time $k\Delta t$ and space $\geq j\Delta q$) on the same side of the equality we obtain
\begin{equation}\label{eq:tridiagonal}
AH_{i-1,j}^{k}+BH_{i,j}^{k}+CH_{i+1,j}^{k} = Z_{i,j}^{k*} 
\end{equation}
where 
\[Z_{i,j}^{k*} =DH_{i,j}^{k+1} + E H_{i,j-1}^{k} + F_i(\nu)\]
 \[F_i(\nu)=-\left(i\Delta s-\frac{1}{2}\Delta -f(\nu)\right)\nu\]
and $A$,$B$,$C$,$D$,$E$ are the corresponding constants.

Since there is no diffusion in the $q$ variable the traditional $ADI$ scheme for $2$ dimensions could lead to oscillations and instabilities. That is why we prefer this approach, namely \textit{Directional $2$-dimensional finite differences}, where we use implicit solutions for vectors changing the variable $S$ but having the variable $q$ constant, which additionally simplifies to tridiagonal systems. More precisely, for each time $k \Delta t$ before expiry, we begin by calculating a boundary condition at $q=0$, and once solved for $q = (j-1) \Delta q$, we step to $q = j \Delta q$ and use the tridiagonal system in \eqref{eq:tridiagonal} to solve for all the values of $S$, and iterate in ascending order over $q$ up to $Q_{max}$. That is, on a uniform grid, we find the values of the function $H$ for all values of $i$ while iterating in ascending order over $j$ and then stepping backwards through time in $k$, as shown in figure (\ref{fig:uniformgrid}).
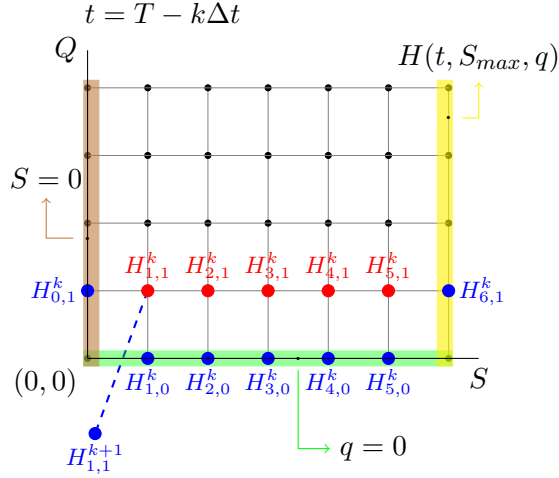
\begin{figure}
	\centering
	\begin{tikzpicture}
	\foreach \x in {-2.4,-1.6,-0.8,...,3.2}
	{
		\draw[very thin, gray] (\x,-1.8) -- (\x,1.8);
		\foreach \y in {-1.8,-0.9,...,2.7}
		{
			\filldraw (\x,\y) circle (0.04);
		}
	}
	\foreach \y in {-1.8,-0.9,...,2.7}
	{
		\draw[very thin, gray] (-2.4,\y) -- (2.4,\y);
	}
	\draw (-2.4,-1.8) node[below left] {$(0,0)$};
	\draw[blue,thick,dashed] (-1.6,-0.9) - - (-2.3,-2.8);
	\filldraw[blue](-2.3,-2.8) circle (0.08) node[below] {\footnotesize $H_{1,1}^{k+1}$};	
	\filldraw[color=green,opacity=0.4] (-2.45,-1.9) -- (2.45,-1.9) -- (2.45,-1.7) -- (-2.45,-1.7) --cycle;
	\filldraw[color=brown,opacity=0.6] (-2.45,-1.9) -- (-2.45,1.9) -- (-2.25,1.9) -- (-2.25,-1.9) --cycle;
	\filldraw[color=yellow,opacity=0.6] (2.45,-1.9) -- (2.45,1.9) -- (2.25,1.9) -- (2.25,-1.9) --cycle;	
	\filldraw[blue](-1.6,-1.8) circle (0.08) node[below] {\footnotesize $H_{1,0}^k$};
	\filldraw[blue](-0.8,-1.8) circle (0.08) node[below] {\footnotesize $H_{2,0}^k$};
	\filldraw[blue](0.0,-1.8) circle (0.08) node[below] {\footnotesize $H_{3,0}^k$};
	\filldraw[blue](0.8,-1.8) circle (0.08) node[below] {\footnotesize $H_{4,0}^k$};
	\filldraw[blue](1.6,-1.8) circle (0.08) node[below] {\footnotesize $H_{5,0}^k$};
	\filldraw[blue](-2.4,-0.9) circle (0.08) node[left] {\footnotesize $H_{0,1}^k$};
	\filldraw[red](-1.6,-0.9) circle (0.08) node[above] {\footnotesize $H_{1,1}^k$};
	\filldraw[red](-0.8,-0.9) circle (0.08) node[above] {\footnotesize $H_{2,1}^k$};
	\filldraw[red](0.0,-0.9) circle (0.08) node[above] {\footnotesize $H_{3,1}^k$};
	\filldraw[red](0.8,-0.9) circle (0.08) node[above] {\footnotesize $H_{4,1}^k$};
	\filldraw[red](1.6,-0.9) circle (0.08) node[above] {\footnotesize $H_{5,1}^k$};	
	\filldraw[blue](2.4,-0.9) circle (0.08) node[right] {\footnotesize $H_{6,1}^k$};
	\node (A) at (-2.4,-0.2){.};
	\node (B) at (-2.95, 0.6) {$S=0$};
	\node (C) at (0.4,-1.8) {.};
	\node (D) at (1.4, -3.0) {$q=0$};
	\node (E) at (2.4,1.4) {.};
	\node (F) at (2.8,2.2) {$H(t,S_{max},q)$};
	\draw[brown][->, to path={-| (\tikztotarget)}]
	(A) edge (B);
	\draw[yellow][->, to path={-| (\tikztotarget)}]
	(E) edge (F);
	\draw[green][->, to path={|- (\tikztotarget)}]
	(C) edge (D);
	
	\draw[black] (-1.4, 2.5) node[above] {$t=T-k\Delta t$};
	\draw[] (-2.4,-1.8) -- (-2.4,2.3) node[left] {$Q$};
	\draw[] (-2.4,-1.8) -- (2.8,-1.8) node[below] {$S$};
	\end{tikzpicture}
	\caption{This figure shows an example of the uniform grid at time $t=T-k\Delta t$, where the brown highlighted line corresponds to the boundary at $S=0$, the yellow to the boundary at $S=S_{max}$ and the green to the boundary at $q=0$. The blue dots correspond to known values and the red dots correspond to unknown values that will be found simultaneously at each step through $q$ using the tridiagonal system \eqref{eq:tridiagonal} } \label{fig:uniformgrid}
\end{figure}
\subsubsection{\textit{Optimization}}
For the optimization process, since $f(\nu)$ is the TPI and $g(\nu)$ is the PPI, we apply optimal policy iteration using first order condition (FOC) on the control $\nu$, that is
\begin{equation}
\label{eqn:FOC_v}
-g'(\nu)\partial_sH-\partial_qH+S-\frac{1}{2}\Delta-\left(f'(\nu)\nu+f(\nu)\right)=0
\end{equation}
Specifically:
\begin{itemize}
\item For linear TPI and quadratic PPI, that is $f\left(\nu\right)=a_1 \nu +a_2\ $, and $g(\nu)=c_1\nu^2+c_2\nu+c_3$ the FOC is
\begin{equation}
	-\left(2c_1\nu+c_2 \right)\partial_sH-\partial_qH+S-\frac{1}{2}\Delta-\left(2a_1\nu+a_2 \right)=0.
\end{equation}
\item For power TPI and power PPI, that is $f\left(\nu\right)=r_1 \nu^{r_2} +r_3\ $, and $g(\nu)=p_1\nu^{p_2}+p_3$ the FOC is
\begin{equation}\label{eqn:FOC_vp}
	-p_1p_2\partial_sH\nu^{p_2-1}-\partial_qH+S-\frac{1}{2}\Delta-\left(r_1(r_2+1)\nu^{r_2}+r_3\right)=0.
\end{equation}
\end{itemize}
Note that linear cases are included in the power form just by taking $r_2=1$ in $f(\nu)$ and/or $p_2=1$ in $g(\nu)$.

\section{Model calibration}
This section presents the implementation of a method, based on \citet{cj16} and \citet{cjp15}, to estimate functional forms of TPI and PPI. Additionally, to estimate the volatility, we use the realized volatility, i.e, $\hat{\sigma} =\sqrt{ \sum_{t=1}^{T}\left(\ln(S_t)-\ln(S_{t-1})\right)^2 }$, which is consistent with the model.

\subsection{\textit{Temporary Price Impact}}
To find the functional form of the TPI, following the ideas from \citet{cjp15}, we use LOB data with $n$ ticks of depth taken at $N$ intervals of $\tau$ seconds. In each of those $N$ intervals, we simulate the  execution of different sizes of Sell Market Orders, i.e. we liquidate the $m$ volumes (As orders of different sizes) $Q=\{Q_1,Q_2,...Q_i,...Q_m\}$ against the existing LOB, walking the book when necessary. The LOB is represented, for each available tick $j$, by a price $S_j$ and its corresponding number of Limit Orders $Q_{S_j}$. These volumes are chosen in an increasing order up to a significant maximum relative to the LOB's total capacity (all posted limit orders to buy). To find the execution price we define it as the weighted average of the execution prices for each volume, in bid side, that is
\[ \displaystyle \hat{S}^{bid}_{i,t} = \frac{\sum_{j=1}^{N_i} S_j Q_{s_j}^* }{Q_i}, \quad \text{ where } Q_{s_j}^* = \begin{cases} Q_i - \sum_{k=1}^{j-1}Q_{s_{k}} & \text{if } Q_i \leq \sum_{k=1}^{j}Q_{s_{k}} \\ Q_{s_j} & \text{else.} \end{cases},\]
and $\displaystyle N_i = \min \left\{j, \sum_{k=1}^{j} Q_{s_k} \geq Q_i  \right\}$, and $0 \leq t \leq N_t \tau$. Since we perform this procedure every $\tau$ seconds, we consider not only volumes, but also liquidation rates ($\nu$) in our simulations, i.e. volumes per $\tau$ seconds. Thus, by letting $S_t^{bid}$ be the best price on the bid side, the TPI at time $t$ due to the liquidation rate $\displaystyle \nu_{i} = \frac{Q_i}{\tau}$ is $\displaystyle  TPI^{bid}_{i,t} = \hat{S}^{bid}_{i,t}-S^{bid}_{t}$. Finally, we use the average of $TPI^{bid}_{i,t} $ over time, denoted by $TPI^{bid}_{i}$, versus $\nu_i$ to fit the models $TPI^{bid}(\nu)=f(\nu)=a_1\nu+a_2$ and $TPI^{bid}(\nu)= f(\nu)=r_1\nu^{r_2}+r_3$, via least squares.

\subsection{\textit{Permanent Price Impact}}
To estimate a functional form of the permanent price impact, we use the same idea as for the temporary impact, i.e. going through the data every $\tau$ seconds and at each of those moments executing simulated Market Sell Orders of different volume sizes and therefore different trading liquidation rates. Then we capture the impact of each trading rate, in this case, on the midprice $\displaystyle S^{mid}_{t} = \frac{S^{bid}_{t}+S^{ask}_{t}  }{2}$, that is, the difference between the mean price existing before the liquidation of $Q_i$ shares in $\tau$ seconds and the mean price remaining after the liquidation. Using least squares, we estimate the parameters of the models $PPI^{bid}(\nu)=g(\nu) =b_1 \nu+b_2$ and $PPI^{bid}(\nu)= g(\nu) = p_1\nu^{p_2}+p_3$.\\


\section{Results}

\subsection{Model parameters calibration}
We use LOB intraday data with $n=100$ ticks of depth and a frequency of $\tau=5$ seconds for $N_t=1440$ (two hours) taken from Binance API for BNB \footnote{BNB is the official cryptocurrency of the Binance exchange and powers the BNB Chain ecosystem} on February 6, 2022.

A first aspect to consider in the calibration is to capture, from the LOB, the total available volume at each moment that could be liquidated, i.e., for each moment add up the available volumes (number of shares per tick) of the 100 ticks. This information is very important in the calibration of the functional form of both the TPI and the PPI, because if the amount of shares to be liquidated over a time lapse $\tau$ is small compared to the available volumes in the LOB, the impact will be underestimated, since it would barely walk the book. In this case, the calibration generates a power functional form $(x^\alpha \text{ with } \alpha>1)$, we call this scenario underestimation. In the opposite case, if the amount of shares to be liquidated is higher than available volumes, the impact will be overestimated, since transactions have to walk the book. Moreover, after a certain volume the full depth of the LOB will be reached and therefore the impact will not grow more, this leads to a power functional form $(x^\alpha \text{ with } 0 <\alpha < 1)$, we call this scenario overestimation. Finally, the intermediate case is found thinking of liquidation rates, we see the effect on $\nu = Q/ \tau$ since liquidation volumes are considered in a time lapse $\tau$. Thus, for underestimation scenario, we consider $\nu_{max} = 50$, for overestimation scenario $\nu_{max} = 7000$, and as an intermediate value between the two previous cases, we use $\nu_{max} = 1200$. This intermediate scenario leads to a calibration of the functional forms of TPI and PPI very close to linear function $(x^\alpha \text{ with } \alpha=1)$, we call this scenario average-estimation.

The underestimated scenario is shown in figure \ref{fig:underimpact}, where subfigure (\ref{subfig:underTPI}) shows the plot of trading rates $\nu_i$ vs $TPI^{bid}_i$, for volumes $1 \leq i \leq m$, as well as the plots of the linear and power calibrated models $TPI^{bid}(\nu)=f(\nu) =a_1 \nu+a_2$ and $TPI^{bid}(\nu)= f(\nu) = r_1\nu^{r_2}+r_3$. Subfigure (\ref{subfig:underPPI}) shows plots of trading rates $\nu_i$ vs. $PPI^{bid}_i$ and the calibrated PPI models $PPI^{bid}(\nu)=g(\nu) =b_1 \nu+b_2$ and $PPI^{bid}(\nu)= g(\nu) = p_1\nu^{p_2}+p_3$. We can see that the shape of the average TPI and PPI function can be calibrated more closely through a power function than through a linear function. Due to the above, we consider, to find optimal policies in this scenario, power $(x^\alpha \text{ with } \alpha>0 \text{ and } \alpha \neq 1)$ and linear $(x^\alpha \text{ with } \alpha=1)$ impact forms.

\begin{figure}
	\centering
	\subfigure[TPI]
	{\label{subfig:underTPI}\includegraphics[scale=0.42]{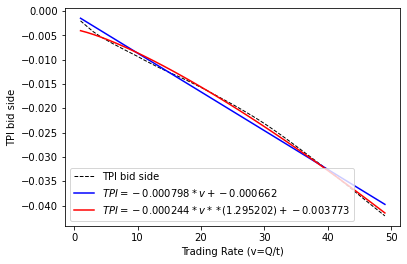}}
	\subfigure[PPI]
	{\label{subfig:underPPI}\includegraphics[scale=0.42]{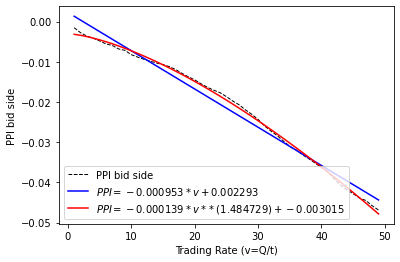}}
	\caption{Underestimation of power and linear functional forms of TPI and PPI with $\nu_{max} = 50$, $m=50$} 
	\label{fig:underimpact}
\end{figure}
In figure \ref{fig:overimpact}, we have the overestimated scenario and specifically note the changes in the concavity of the calibrations.
\begin{figure}
	\centering
	\subfigure[TPI]
	{\includegraphics[scale=0.42]{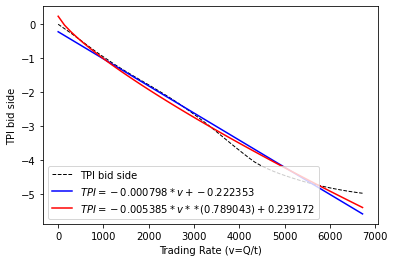}}
	\subfigure[PPI]
	{\includegraphics[scale=0.42]{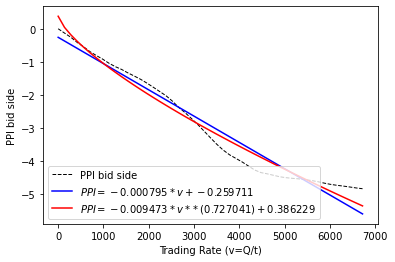}}
	\caption{Overestimation of power and linear functional forms of TPI and PPI with $\nu_{max} = 7000$,$m=50$} 
	\label{fig:overimpact}
\end{figure}
Additionally, the average-estimation scenario is in figure \ref{fig:meanimpact}.
\begin{figure}[H]
	\centering
	\subfigure[TPI]
	{\includegraphics[scale=0.42]{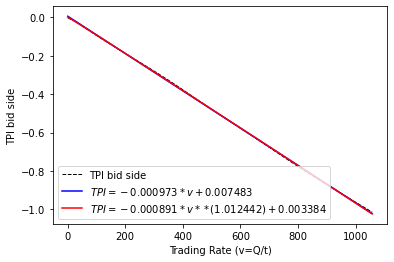}}
	\subfigure[PPI]
	{\includegraphics[scale=0.42]{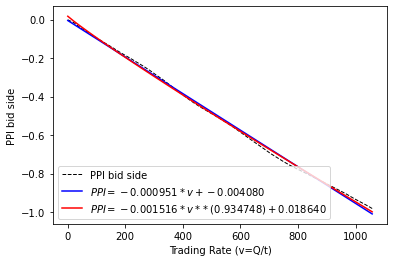}}
	\caption{Average-estimation of power and linear functional forms of TPI and PPI with $\nu_{max} = 1200$,$m=50$} 
	\label{fig:meanimpact}
\end{figure}

To measure which of the two functional forms, linear or power, fits best the average TPI or PPI on the different scenarios, we use the squared residuals, defined as $\hat{\epsilon}^2_{i,t} = \left(y_i - \hat{y}_i \right)^2_t$ and the corresponding R-squared, where $y_i$ is $TPI^{bid}_{i}$ or $PPI^{bid}_{i}$ and $\hat{y_i}$ is $TPI^{bid}(\nu_i)$ or $PPI^{bid}(\nu_i)$ (respectively). In appendix \ref{sec:App4}, we show these results and we observe that, in all cases, both the total sum of the residuals and the average of the residuals are lower for the power functional form than for the linear form. Likewise, the $R^2$ is higher in all cases for the power functional form than for the linear form. This implies a better fit in all cases for the power functional form, even in the average scenario where the differences are minimal. We also obtain an average spread of $\Delta = 0.100069$ and an estimated constant volatility of $\sigma=0.009388$.

\subsection{Optimal policies}
This section refers to the algorithm optimizing the trading rate, that is the numerics for solving the $\mathcal{H}\mathcal{J}\mathcal{B}$ PDE and using the optimal policies in \eqref{eqn:FOC_v} and \eqref{eqn:FOC_vp}.

\subsubsection{\textit{Error and Convergence analysis}}
To verify our numerical algorithm, we perform some tests, first we calculate the absolute error (when feasible) comparing the analytical and numerical value function $H$ at $t=0$, that is for linear and quadratic PPI, obtaining an accuracy of $10^{-5}$ and $10^{-4}$ respectively.

Additionally, we were able to confirm the convergence rate of the numerical method for each of the independent variables of the value function $H$, i.e. inventory $q$, price $s$ and time $t$, finding that the convergence rate is $1$ for $q$ as expected. As for the price $s$, convergence rate is very close to the theoretical value $2$ for all points of the grid, likewise, the convergence rate for time $t$ is around its theoretical value $1$.

Finally, to verify that there are no significant differences between the theoretical vs numerical optimal policies $\nu$, we applied the Chi-square goodness of fit test between numerical and analytical policies for the linear and quadratic PPI. We verify there are no significant differences between these policies, i.e. in essence the policy found through the numerical method is the same as the one found analytically. After performing these tests, we are confident that our algorithm is correct and is working properly.

\subsubsection{\textit{Analysis of optimal policies}}
Next, we show the optimal policies obtained using the numerical approach in section \ref{sec:Numerics}, and we explore the results of using the calibrated linear and power TPI and PPI for the scenarios under, over and average estimation. 

Figure (\ref{fig:3Dopt_nu_Under05pp}) shows the plot of optimal policies along with a path for the policy and inventory beginning at $0.5$ with power TPI and PPI in underestimation scenario. To construct the path following the optimal policy and the optimal inventory, starting at $q_0$ inventory, with the corresponding optimal trading rate $\nu(0,q_0)$, we find $q_1 = q_0 - \nu(0,q_0)\Delta t$. If $q_1$ lies on the grid, the procedure continues, otherwise the respective value of $\nu(1,q_1)$ is interpolated. We repeat the procedure using $q_k = q_{k-1} - \nu(k-1,q_{k-1})\Delta t$ from $t=0$ to $t=T=1$. Figure (\ref{fig:2Dopt_nu_Under05pp}) shows the corresponding 2D graph of the policy path. Note that the figure (\ref{fig:3Dopt_nu_Under05pp}) is a plot of $\nu(t,q)$, i.e. not depending on $S$, because we verify that optimal policies are independent of the price, as we can see in figure (\ref{fig:nuindep}).

\begin{figure}
	\centering
	\subfigure[3D optimal policy for fixed inventory $q=0.5$]
	{\label{subfig:tsindep}\includegraphics[scale=0.42]{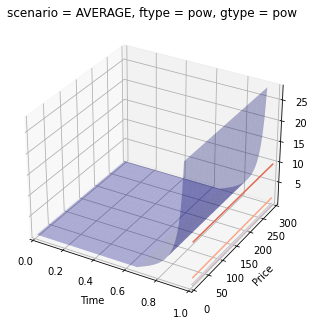}}
	\subfigure[3D optimal policy for fixed time $t=0.5$]
	{\label{subfig:qsindep}\includegraphics[scale=0.42]{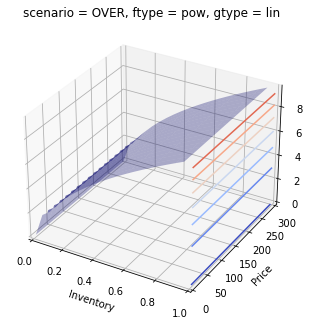}}
	\caption{Optimal policies for fixed inventory and time respectively. We observe that the policy is price-independent in all scenarios and under the different combinations of the TPI and PPI functions. The graphs show two examples of this situation.} \label{fig:nuindep}
\end{figure}

By changing the starting point of the inventory to $0.2$, figures (\ref{fig:3Dopt_nu_Under02pp}) and (\ref{fig:2Dopt_nu_Under02pp}) show 3D optimal policy, optimal policy path, optimal inventory path and corresponding 2D optimal policy path respectively. We observe some changes in the liquidation policy, since starting with inventory $0.5$ implies liquidation at rates close to $0$ after approximately time $t=0.9$, while when starting with inventory $0.2$ this low rates are reached around $t=0.6$. One possible interpretation of this situation is that, in an underestimation scenario, a small impact causes the agent to liquidate early without much impact and avoid the risk of exposure. We opted to show only these graphs because we think are interesting since power TPI and PPI have best fit. However, as the functional forms of TPI and PPI change, also does the policy.

\begin{figure}
	\centering
	\subfigure[3D optimal policy, optimal policy path and optimal inventory path]
	{\label{fig:3Dopt_nu_Under05pp}\includegraphics[scale=0.42]{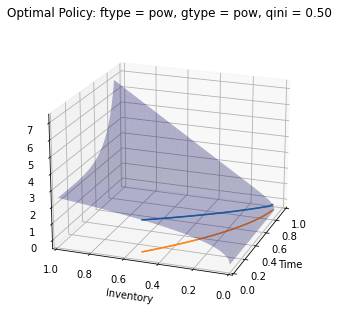}}
	\subfigure[2D optimal policy path]
	{\label{fig:2Dopt_nu_Under05pp}\includegraphics[scale=0.42]{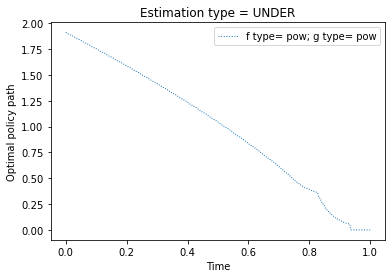}}
	\caption{Optimal policy for constant $S$ along with a path for the policy and inventory when it begin at $0.5$ with power TPI and PPI in underestimation scenario} 
\end{figure}
\begin{figure}
	\centering
	\subfigure[3D optimal policy, optimal policy path and optimal inventory path]
	{\label{fig:3Dopt_nu_Under02pp}\includegraphics[scale=0.42]{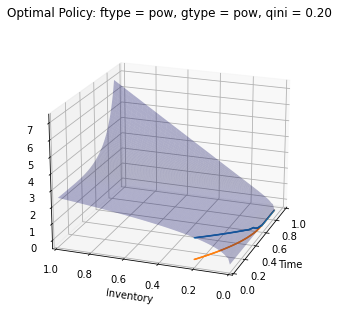}}
	\subfigure[2D optimal policy path]
	{\label{fig:2Dopt_nu_Under02pp}\includegraphics[scale=0.42]{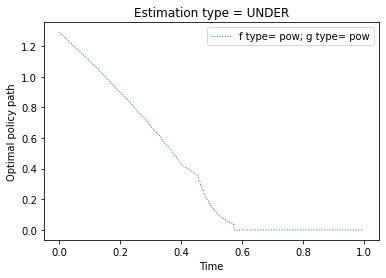}}
	\caption{Optimal policy for constant $S$ along with a path for the policy and inventory when it begin at $0.2$ with power TPI and PPI in underestimation scenario} 
\end{figure}

Using parameters $T = 1$, $S_{max} = 300$, $Q_{max} = 1$,$N_t=360$, $N_s = 10$, $N_q=100$, $\Upsilon = 0.5$, the calibrated coefficients showed in appendix \ref{sec:App3} and using the mid point price $S=150$ as a constant, we plot the optimal policy paths. In figure \ref{fig:opt_nu_comp_calUnder}, we observe that if PPI is linear, the policy path suggests to liquidate more slowly near initial time, even zero, to increase the liquidation rate to a maximum near time $T$, in which the inventory is zero, then the trading rate drop to zero. However, when PPI is power, the policy starts liquidating quickly and decreasing the trading rate over time, except towards the end of the liquidation period, because having positive inventory causes the trading rate to increase slightly until all inventory is liquidated. This may be caused by the fear of the heavy penalty imposed in the boundary to avoid positive inventory at final time $T$.

We notice a huge difference between the policies, for different TPI and PPI functions, and notably when $g$ changes from linear to power, because the policy changes from increasing to decreasing. This may be caused by the concavity induced in the underestimation scenario, since not walking the book implies a faster execution and as there are less shares to liquidate the trading rate decreases. However, this only happens when PPI is power, the opposite happens when PPI is linear, which might suggest that in this scenario linear PPI has a greater impact on the optimal policy, than power PPI.

Next we find the optimal policy paths for over and average-estimation of linear and power TPI and PPI. Figures \ref{fig:opt_nu_comp_calAver} and \ref{fig:opt_nu_comp_calOver} show a comparison of the optimal policy paths for different TPI and PPI. Interestingly, in the average-estimation scenario we observe, that the optimal policy paths are similar if PPI is power, regardless of the functional form of TPI, begins by not liquidating and then increases until a certain point at which it drops to zero.

For policies where PPI is linear, we observe that they start with a high liquidation rate, decreasing over time and quickly reach a zero trading rate. This is the opposite of the situation observed in the underestimation scenario. This suggests that in the average-estimation scenario, power PPI has a greater impact than linear PPI.

Similar situation is presented in the  overestimation scenario, in figure \ref{fig:opt_nu_comp_calOver}, only that in such scenario we observe that policy paths that liquidate early (Linear PPI) do so more quickly than those liquidating late (power PPI). Moreover, by having even a small inventory, the penalty imposed towards the end of the time period causes a very large liquidation rate, which happens when PPI is power.

According to the above, the optimal policy obtained is highly sensitive to the model calibration process and therefore the optimal policy depends on the calibration scenario, the functional form of both TPI and PPI and the LOB data, since model coefficients depend entirely on this information.
\begin{figure}
	\centering
	{\includegraphics[scale=0.5]{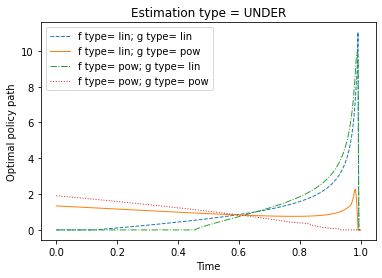}}
	\caption{Graphical comparison of optimal policies from underestimated TPI and PPI, with $\Upsilon = 0.5$} 
	\label{fig:opt_nu_comp_calUnder}
\end{figure}
\begin{figure}
	\centering
	{\includegraphics[scale=0.5]{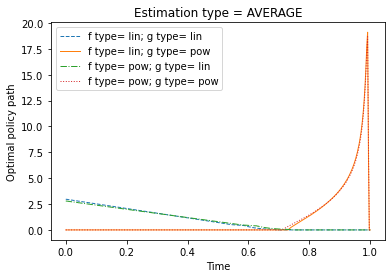}}
	\caption{Graphical comparison of optimal policies from average-estimated TPI and PPI, with $\Upsilon = 0.5$} 
	\label{fig:opt_nu_comp_calAver}
\end{figure}
\begin{figure}
	\centering
	{\includegraphics[scale=0.5]{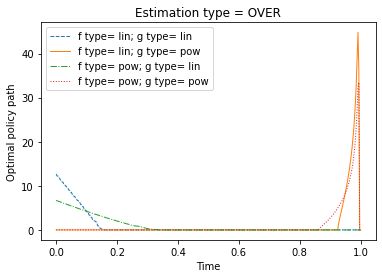}}
	\caption{Graphical comparison of optimal policies from overestimated TPI and PPI, with $\Upsilon = 0.5$} 
	\label{fig:opt_nu_comp_calOver}
\end{figure}

\subsection{Simulation}
This section presents simulations of the liquidation of $\Upsilon$ shares of BNB using the optimal policies calculated from the previous section. The simulation consists of taking an initial inventory $\Upsilon$ and liquidating that inventory over time $[0,T]$ walking the book when needed, following each of the policy paths and then compare the behavior of the numerical optimal strategies versus the naive strategy (NS), which consists of liquidating all cryptocurrencies in a single prespecified time.

To refer to the numerical optimal strategies we use the following notation: the first letter refers to the estimation scenario, i.e. under (U), over(O) and average(A), the second letter refers to TPI and therefore will always be T, the third letter refers to the functional form of TPI, i.e. linear (L) or power (P), the fourth letter will always be P in reference to PPI and the fifth and last letter refers to the functional type of PPI, i.e. L or P. 
For example, the optimal policy obtained in the underestimation scenario with linear TPI and power PPI is represented by the letters UTLPP. With these references, we use the policy paths starting with inventory $0.5$ for UTLPL, UTLPP, UTPPL, UTPPP, OTLPL, ATLPL, ATLPP, ATPPL, ATPPP, OTLPP, OTPPL and OTPPP.

In these simulations we assume, in the sense of \citet{cj17}, that the order book is resilient, i.e. when executing a MO, at the next moment in time the book has again cleared that order and therefore the order book is taken as it has been downloaded, without any modification. We begin showing in table (\ref{tab:CumRevenueComp}) the cumulative revenue obtained by simulating the execution of each strategy as a percentage of NS, when we liquidate $\Upsilon = 4000$ BNB cryptocurrencies \footnote{we choose a value of $4000$ to ensure that this inventory is liquidatable at a single point in time} in $1800$ seconds at intervals of $5$ seconds.

\begin{table}
	\begin{center}
		\begin{tabular}{|l|l|}
			\hline
			\small
			Strategy & Accumulated \\
			& Revenue (\%) \\
			\hline \hline
			UTLPL & 1.010922 \\ \hline
			UTLPP & 1.009505 \\ \hline
			UTPPL & 1.011206 \\ \hline
			UTPPP & 1.008875 \\ \hline
			ATLPL & 1.008409 \\ \hline
			ATLPP & 1.011454 \\ \hline
			ATPPL & 1.008456 \\ \hline
			ATPPP & 1.011473 \\ \hline
			OTLPL & 1.006268 \\ \hline
			OTLPP & 1.010698 \\ \hline
			OTPPL & 1.007157 \\ \hline
			OTPPP & 1.011098 \\ \hline
			NS & 1.000000 \\ \hline
		\end{tabular}
		\caption{Cumulative revenue as a percentage of NS}
		\label{tab:CumRevenueComp}
	\end{center}
\end{table}

From the results in table \ref{tab:CumRevenueComp}, we observe that interestingly the policy paths that start with a slow trading rate are the ones that earn the highest revenue. The policy path that generated the highest revenue was ATPPP with a cumulative revenue of $1.1473\%$ above NS and in second place was ATLPP with a revenue of $1.1454\%$ above NS. This, although surprising, may be justified by the fact that the number of shares to liquidate gives an average estimation when compared to the LOB. In average-estimation scenario, these policy paths are the ones that start liquidating slowly, even zero, increasing the trading rate until a maximum point and when the inventory is zero, dropping to zero. Similarly, in the overestimation and underestimation scenarios, the policy paths that obtained the highest cumulative revenue were those that started with a low trading rate. 

Likewise, regardless of the scenario, the policy paths that obtained the lowest cumulative revenue were those that liquidated faster at the beginning, such is the case of OTLPL which was the one that obtained the lowest revenue with a $0.6268\%$ above NS and OTPPL, which cumulative revenue was $0.7157\%$ above NS.

This situation can be described in terms of optimal inventory. If an optimal policy path starts by liquidating slowly and then increase the liquidation rate, the optimal inventory path will decrease slowly at first and then decrease more rapidly, i.e., it is a concave function.

In the opposite case, when the policy path liquidates quickly at the beginning and then decreases the liquidation rate, the inventory path will decrease quickly at the beginning and then decrease more slowly, i.e., a convex function.

Then, the optimal inventory path is a non-increasing function, which when concave describes an optimal policy path that, as shown before, generates higher cumulative revenue than convex optimal inventory functions, see figure (\ref{fig:qopt}).

\begin{figure}
	\centering
	{\includegraphics[scale=0.5]{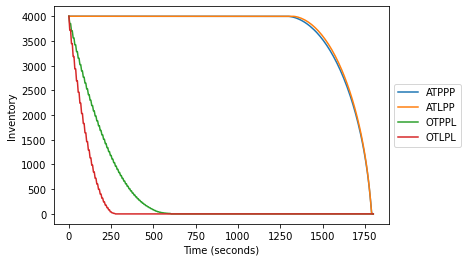}}
	\caption{Optimal inventory paths for which, when concave, such as ATPPP and ATLPP, generate higher cumulative revenues than convex paths, such as OTPPL and OTLPL.} \label{fig:qopt}
\end{figure}

As a way of validating this behavior of the strategies we vary the inventory $\Upsilon$, that is we simulate the liquidation of different inventories starting at $500$ and ending at $4000$ with increments of $500$. The main result of this simulation is shown in figure (\ref{fig:optvcomparqvaries}), in which we observe that as the size of the inventory to be liquidated increases, the percentage of cumulative revenue over NS obtained by optimal strategies also increases. We also note that the strategies with highest revenue are always the same, i.e. ATPPP and ATLPP and the under-performing strategies are also the same, i.e. OTPPL and OTLPL. This is an indication that the performance of the strategy is independent of $\Upsilon$.

\begin{figure}[H]
	\centering
	{\includegraphics[scale=0.5]{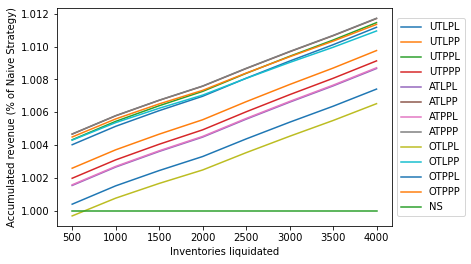}}
	\caption{Comparison of the performance of the optimal strategies through accumulated revenue as a percentage of NS when inventory varies from $500$ to $4000$ with steps of $500$.} \label{fig:optvcomparqvaries}
\end{figure}
To confirm or reject the empirical hypothesis that optimal policy paths that start with a slow liquidation rate and increase as time passes are the ones generating higher cumulative revenue and that, on the contrary, policy paths that start with a high trading rate and decrease as time passes are the ones generating less cumulative revenue, we consider that the optimal inventory paths can be modeled by the non-increasing monotonic function
\begin{equation}
	\begin{array}{lccc}
	q: & \left\{t \in \mathbb{Z}: 0 \leq t \leq T \right\}& \rightarrow &[0,\Upsilon]\\
	 	& t & \rightarrow & d_1 t^{d_2} +d_3
	\end{array}
\end{equation}
subject to $q(0) = \Upsilon$ and $q(T) = 0$, then 
\begin{equation}
	q(t) = \frac{-\Upsilon}{T^{d_2}}t^{d_2} +\Upsilon
\end{equation}
For the parameters $\Upsilon = 4000$, $T=1800$ and taking the exponent $d_2$ in $\left[ 0.1,0.5,1,5,10\right]$ to obtain changes in the concavity of the function $q(t)$, we build simulations of optimal inventory paths, see figure (\ref{subfig:simulq}). With each of these inventory paths, we simulate the execution of the corresponding optimal policy path, thus obtaining its cumulative revenue as a percentage of NS, characterized by $d_2$, see figure (\ref{subfig:cumrevsimulq}).

\begin{figure}
	\centering
	\subfigure[Simulated optimal inventory paths from equation $q(t) = \frac{-4000}{1800^{d_2}}t^{d_2} +4000$]
	{\label{subfig:simulq}\includegraphics[scale=0.38]{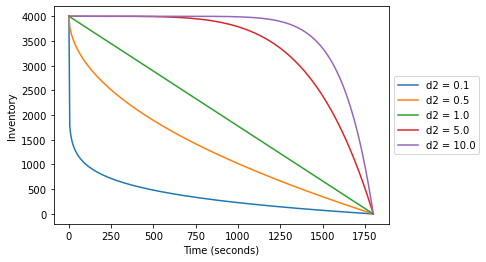}}
	\subfigure[Cumulative revenue from simulated optimal inventory paths]
	{\label{subfig:cumrevsimulq}\includegraphics[scale=0.38]{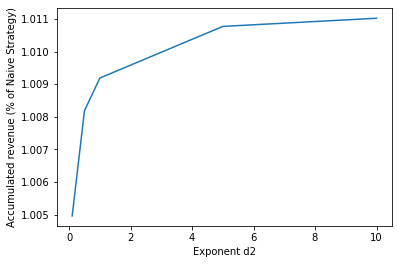}}
	\caption{Simulation of optimal inventory paths and corresponding cumulative revenue as a percentage of NS.} \label{fig:simulq}
\end{figure}

We observe that when the exponent $d_2$ is less than 1, i.e, $q(t)$ is convex, the cumulative revenue as a percentage of NS obtained is less than the other strategies. It seems that by increasing the value of the exponent $d_2$, the cumulative revenue also increases. However, if the value of the exponent $d_2$ is very close to zero or very large, these inventory paths begin to approximate those of the NS strategy and therefore the corresponding cumulative revenue obtained decrease. Figure \eqref{fig:cumrevsimulq} shows the cumulative revenue as a percentage of NS obtained by simulating optimal inventories paths when the exponent $d_2$ varies from $0.01$ to $100$ with increments of $0.1$. We can observe that for values of $d_2$ close to zero, the cumulative revenue is close to that obtained by NS and that for large values of the exponent $d_2$ the cumulative revenue decreases. This situation leads us to conclude that in an interval around $1$ for the exponent $d_2$, the concave optimal inventories paths generate a higher cumulative revenue than convex optimal inventories paths.

\begin{figure}
	\centering
	{\includegraphics[width=3.0in, height=2.6in,keepaspectratio=true]{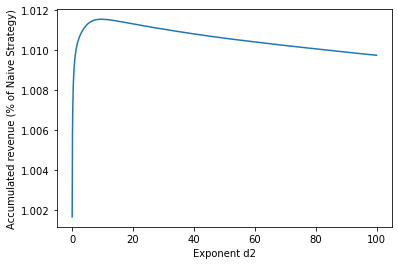}}
	\caption{Cumulative revenue from simulated optimal inventories paths for $d_2$ from $0.01$ to $100$ with steps of $0.1$, liquidating $\Upsilon = 4000$.} \label{fig:cumrevsimulq}
\end{figure}

In terms of optimal policies, those that start with a slow trading rate and increase over time, as opposed to those that start liquidating more quickly and decrease over time, generate higher cumulative revenue. Likewise, we emphasize that this behavior is only observed if such policy paths do not reach the extreme of resembling the NS, since in that case their cumulative revenue decreases.


\section{Discussion and conclusions}
In this work we study the problem of optimal liquidation by analyzing the consequences of having different functional forms for temporary impact, specifically linear and power, and for permanent impact, specifically linear, quadratic and more general power. We apply this theory in the cryptocurrency market given its importance due to its evident growth, high volatility and its articulation in the world of Decentralized Finance (DeFi) and secondly due to the ease of accessing historical and LOB information. Therefore, we use LOB data for BNB cryptocurrency, information downloaded from the Binance API.

When using LOB data, the depth and therefore the volumes available at the LOB play an important role in the estimation and calibration process of both the functional forms of TPI and PPI. That is, when simulating to liquidate different inventories, if the maximum inventory to be liquidated is relatively small to the inventory available in the data, the impact will be small at the beginning of the simulation, but will increase as the size of the inventories increases, which configures a power functional form ($x^a$ with $a>1$).Additionally,  if the inventory is larger than the available inventory at the LOB, then at the beginning of the simulation there is a greater impact than towards the end of the simulation, because the available inventory will quickly be reached and therefore the impact will decrease, which configures a power functional form ($x^a$, but with $0<a<1$). Likewise, for certain values of inventory in the intermediate of these two extreme cases, the functional form of the impacts tends to linearize ($x^a$ with $a=1$). Future research aims to find specific characteristics in which each of the scenarios is configured, as well as possible analytical models that allow identifying, for example, points at which the impacts are linearized and differences between these.

We observe that the shape of the optimal policy paths can change depending on the starting inventory, the functional forms of TPI and PPI, and the values of the calibrated parameters. However, the functional description of the policy paths does not seem obvious, so we carry out this characterization through the concavity of the optimal inventory paths. Leading us to acknowledge the importance of the ratio between the initial amount of shares to be liquidated and the availability of shares in the LOB when characterizing the performance of optimal policies, and that these do not depend strongly on the functional form of TPI and PPI.

Since this is a problem in which an amount $\Upsilon$ of cryptocurrencies is to be liquidated in a pre-established period of time $T$, the optimal inventory path can be described through a non-increasing monotonic function $\displaystyle q(t) = \frac{-\Upsilon}{T^{d_2}}t^{d_2} +\Upsilon$, which describes a large number of optimal policy paths. Under these conditions, we observe that concave optimal inventory paths, i.e., $d_2>1$, are the ones that generate the greatest cumulative revenue, under the condition that the concavity is not so dramatic as the NS strategy. 

This leads us to characterize the optimal winning policy paths as those whose trading rate is low at the beginning and increases over time. In contrast to those policy paths whose trading rate is high at the beginning and decreases over time, which can be considered as losers, are characterized by convex optimal inventory paths, that is, $0<d_2<1$. 

Further research tends to examine this phenomenon in greater detail, identifying, for example, the exact conditions and points at which the described behavior changes, or, based on families of optimal policies, identifying the ``most'' optimal among them, the functional forms of these families of optimal policies, among other options.

The scope of this study is limited, initially by the LOB data available, by the liquidity of the asset to be liquidated since high liquidity is required so that the inventories can be liquidated, in this case in time horizons of seconds. Other limitations of this study are the model used does not take into consideration other variables that could affect the optimal liquidation policy such as volatility variability, liquidating using limit orders, new market and limit order flows by other agents, the liquidation dynamics that may imply jumps due to ticks, among others. Therefore, considering several additional variables to the problem, different models, different characteristics of financial assets and markets in which they are traded are part of the future research agenda.

\nocite{*}
\bibliographystyle{apalike}
\bibliography{biblio}
\newpage

\appendix

\section{Proof of Lemma \ref{lm:linear}}\label{sec:App1}

In this section we assume that $f(\nu) =a_1\nu+a_2$ and $g(\nu)=b\nu$. The objective is to solve the $\mathcal{H}\mathcal{J}\mathcal{B}$ differential equation (\ref{eqn:AgentHJB}) to find the policy that allows the optimal liquidation of $q_t$ shares in a period of time $[t,T)$. Thus we have to solve
\begin{equation}
\label{eqn:HJBlin}
{\partial }_tH+\frac{1}{2}{\sigma }^2{\partial }_{ss}H +\sup_{\nu \in  \mathcal{A}}\left\{-b \nu\partial_sH-\nu \partial_qH+\left(S-\frac{1}{2}\Delta -a_1\nu-a_2\right)\nu\right\}=0
\end{equation}
Using the first order conditions on $\nu$ we obtain:
\begin{equation}
\label{eqn:DElin}
\partial_tH+\frac{1}{2}{\sigma }^2{\partial }_{ss}H+\frac{\left(S-\frac{1}{2}\Delta -a_2- \partial_q H-b \partial_s H\right)^2}{4a_1}=0 
\end{equation}
Using the educated guess
\begin{equation}
\label{eqn:ansatzLin}
H\left(t,S,q\right)=q\left(S-\frac{1}{2}\Delta-a_2\right)-\frac{b}{2}q^2+h(t,q)
\end{equation}
 and replacing in the differential equation (\ref{eqn:DElin}) we simplify to
\begin{equation}
\label{eqn:DEhlin}
{\partial }_th+\frac{1}{4a_1}{\left({\partial }_qh\right)}^2=0
\end{equation}
thus finding the following solutions
\begin{equation}
\label{eqn:Hli2n}
H(t,s,q)=q\left(S-\frac{1}{2}\Delta-a_2\right)-\left(\frac{b}{2}+\frac{a_1}{(T-t)}\right)q^2
\end{equation}
\begin{equation}
\label{eqn:Nulin2}
\nu^*_t=\frac{q_t}{T-t}
\end{equation}

\section{Proof of Lemma \ref{lm:linqu}}\label{sec:App2}
In this section we assume that $f(\nu) =a_1\nu+a_2$ and $g(\nu)=c_1\nu^2+c_2\nu$. The objective is to solve the $\mathcal{H}\mathcal{J}\mathcal{B}$ differential equation (\ref{eqn:AgentHJB}) to find the optimal policy for liquidating $q_t$ shares in a time period $[t,T)$. Thus we have to solve
\begin{equation}
	\label{eqn:HJBquad}
	{\partial }_tH+\frac{1}{2}{\sigma }^2{\partial }_{ss}H+\sup_{\nu \in  \mathcal{A}}\left\{\left(-c_1 \nu^2-c_2\nu\right)\partial_sH-\nu \partial_qH+\left(S-\frac{1}{2}\Delta -a_1\nu-a_2\right)\nu\right\}=0
\end{equation}
Using the first order conditions on $\nu$ we obtain:
\begin{equation}
	\label{eqn:DEquad}
	\partial_tH+\frac{1}{2}{\sigma }^2{\partial }_{ss}H+\frac{\left(S-\frac{1}{2}\Delta-a_2-\partial_qH-c_2\partial_sH\right)^2}{4\left(c_1{\partial }_sH+a_1\right)}=0
\end{equation}
Using the educated guess
\begin{equation}
	\label{eqn:ansatzQuad}
	H\left(t,S,q\right)=q\left(S-\frac{1}{2}\Delta-a_2\right)-\frac{c_2}{2}q^2+h(t,q)
\end{equation}
and replacing in the differential equation (\ref{eqn:DEquad}) we simplify to
\begin{equation}
	\label{eqn:DE}
	{\partial }_th+\frac{\left({\partial }_qh\right)^2}{4\left(c_1q+a_1\right)}=0 
\end{equation}
thus finding the following solutions
\begin{equation}
	\label{eqn:Hquad2}
	H\left(t,S,q\right)= \left\{ \begin{array}{lcc}
		q\left(S-\frac{1}{2}\Delta-a_2\right)-\frac{c_2}{2}q^2-\frac{4\left(c_1q+a_1\right)^3}{9c_1^2(T-t)} &   if  & q \neq 0 \\
		\\ 0 & if & q = 0
	\end{array}
	\right.
\end{equation}
\begin{equation}
		\label{eqn:Nuquad2}
	\nu_t^*=\frac{2\left(c_1q_t+a_1\right)}{3c_1\left(T-t\right)}
\end{equation}  
\section{Comparison squared residuals and $R^2$}\label{sec:App4}
\begin{table}[H]
	\begin{center}
		\begin{tabular}{|l|l|l|l|l|l|}
			\hline
			\multicolumn{5}{|c|}{Comparison squared residuals $(\hat{ \epsilon}^2 )$}& R-squared $(R^2)$\\ \hline
			impact & model & total sum & mean & std dev & \\
			\hline \hline
			Underestimate TPI (UTPI) & Linear & 0.0000590 & 0.0000012 &  0.0000012 & 0.990627\\ \hline
			Underestimate TPI (UTPI) & Power & 0.0000162 & 0.0000003 & 0.0000006 & 0.997424\\ \hline
			\hline
			Underestimate PPI (UPPI) & Linear & 0.0001789 & 0.0000037 &  0.0000029 & 0.980279\\ \hline
			Underestimate PPI (UPPI) & Power & 0.0000171 & 0.0000003 & 0.0000004 & 0.998111\\ \hline
			\hline \hline
			Overestimate TPI (OTPI) & Linear & 2.4095625 & 0.0491747 &  0.0738702 & 0.980702\\ \hline
			Overestimate TPI (OTPI) & Power & 1.4869937 & 0.0303468 & 0.0374643 & 0.988091\\ \hline
			\hline
			Overestimate PPI (OPPI) & Linear & 5.6188366 & 0.1146701 &  0.1361679 & 0.955715\\ \hline
			Overestimate PPI (OPPI) & Power & 3.8335537 & 0.0782358 & 0.0732665 & 0.969786\\ \hline
			\hline \hline
			Average-estimate TPI (ATPI) & Linear & 0.0010416 & 0.0000213 &  0.0000237 & 0.999803\\ \hline
			Average-estimate TPI (ATPI) & Power & 0.0010396 & 0.0000212 & 0.0000230 & 0.999804\\ \hline
			\hline
			Average-estimate PPI (APPI) & Linear & 0.0096960 & 0.0001979 &  0.0001787 & 0.998061\\ \hline
			Average-estimate PPI (APPI) & Power & 0.0056630 & 0.0001156 & 0.0001251 & 0.998867\\ \hline
		\end{tabular}
		\caption{Comparison of linear and power model fit through squared residuals and $R^2$ for under, over and average estimated TPI and PPI}
		\label{rescompbid}
	\end{center}
\end{table}
\section{Calibrated coeficients}\label{sec:App3}
Since the differential equation \eqref{eqn:AgentHJB} incorporated the direction of the negotiations, i.e. liquidation, through the negative sign for both TPI and PPI, these coefficients must be considered through their additive inverse, except those representing exponents, i.e. $r_2$ and $p_2$.
\begin{table}[H]
	\begin{center}
		\begin{tabular}{|l|l|l|l|l|l|}
			\hline
			\multicolumn{5}{|c|}{Coefficients obtained for each calibrated model}\\ \hline
			impact & model & First parameter & Second parameter & Third parameter  \\
			\hline \hline
			UTPI & Linear & $a_1 = -0.00079754$ & $a_2 = -0.00066177$ &  NA \\ \hline
			UTPI & Power & $r_1 = -2.44114118e-04$ & $r_2 = 1.29520174$ & $r_3 = -3.77323856e-03$ \\ \hline
			\hline
			UPPI & Linear & $b_1 = -0.00095264$ & $b_2 = 0.00229332$ &  NA \\ \hline
			UPPI & Power & $p_1 = -1.38745021e-04$ & $p_2 = 1.48472878$ & $p_3 = -3.01507718e-03$ \\ \hline
			\hline \hline
			OTPI & Linear & $a_1 = -0.00079843$ & $a_2 = -0.22235319$ &  NA \\ \hline
			OTPI & Power & $r_1 = -0.00538481$ & $r_2 = 0.78904313$ & $r_3 = 0.23917224$ \\ \hline
			\hline
			OPPI & Linear & $b_1 = -0.00079455$ & $b_2 = -0.2597109$ &  NA \\ \hline
			OPPI & Power & $p_1 = -0.00947337$ & $p_2 = 0.72704129$ & $p_3 = 0.38622943$ \\ \hline
			\hline \hline
			ATPI & Linear & $a_1 = -0.00095984$ & $a_2 = 0.00209078$ &  NA \\ \hline
			ATPI & Power & $r_1 = -0.0011318 $ & $r_2 = 0.97757467$ & $r_3 = 0.01148375$ \\ \hline
			\hline
			APPI & Linear & $b_1 = -0.0009179$ & $b_2 = -0.01720435$ &  NA \\ \hline
			APPI & Power & $p_1 = -0.00200298$ & $p_2 = 0.89422254$ & $p_3 = 0.02904742$ \\ \hline
		\end{tabular}
		\caption{Calibrated oefficients}
		\label{tab:coeficients}
	\end{center}
\end{table}

\end{document}